\def\BibTeX{{\rm B\kern-.05em{\sc i\kern-.025em b}\kern-.08em
    T\kern-.1667em\lower.7ex\hbox{E}\kern-.125emX}}
\pgfplotsset{compat=1.15}
\pgfplotsset{
tick label style={font=\footnotesize},
label style={font=\footnotesize},
legend style={font=\footnotesize},
}
\definecolor{violet}{rgb}{0.6,0,0.6}%
\definecolor{orange_D}{rgb}{1,0.3,0}%
\definecolor{cyan}{rgb}{0,0.67,0.64}%
\definecolor{red}{rgb}{0.9,0,0}%
\definecolor{yellow}{rgb}{1,0.8,0}
\def \fwidth{0.8\columnwidth}
\def \fheightz {0.5\columnwidth}
\def \fheightt {0.6\columnwidth}
\newacronym[plural=MDPs,firstplural=Markov Decision Processes (MDPs)]{mdp}{MDP}{Markov Decision Process}
\newacronym{iot}{IoT}{Internet of Things}
\newacronym{fec}{FEC}{Forward Error Correction}
\newacronym{snr}{SNR}{Signal to Noise Ratio}
\newacronym{aoi}{AoI}{Age of Information}
\newacronym{paoi}{PAoI}{Peak Age of Information}
\newacronym{qaoi}{QAoI}{Age of Information at Query}
\newacronym{pdf}{PDF}{Probability Density Function}
\newacronym{cdf}{CDF}{Cumulative Distribution Function}
\newacronym{opf}{OPF}{Oldest Packet First}
\newacronym{jfi}{JFI}{Jain Fairness Index}
\newacronym{fcfs}{FCFS}{First Come First Serve}
\newacronym{lcfs}{LCFS}{Last Come First Serve}
\newacronym{pec}{PEC}{Packet Erasure Channel}
\newacronym{irsa}{IRSA}{irregular repetition slotted ALOHA}
\newacronym{fhw}{FHW}{Flatto-Hahn-Wright}
\newacronym{urllc}{URLLC}{ultra-reliable low-latency communication}
\newacronym{mmtc}{mMTC}{massive machine type communications}
\newacronym{pmf}{PMF}{probability mass function}
\newacronym{wgcp}{WGCP}{weighted graph coloring problem}
\newacronym{whcp}{WHCP}{weighted hypergraph coloring problem}
\newacronym{mab}{MAB}{multi-armed bandit}
\newacronym{drl}{DRL}{deep reinforcement learning}
\newacronym{iiot}{IIoT}{Industrial Internet of Things}
\newacronym{lora}{LoRaWAN}{Long Range Wide Area Network}
\newtheorem{theorem}{Theorem}
\DeclareMathOperator*{\argmin}{arg\,min}
\DeclareMathOperator*{\argmax}{arg\,max}
\begin{document}



\title{Learning to Speak on Behalf of a Group: Medium Access Control for Sending a Shared Message} 

\author{Shaan ul Haque, Siddharth Chandak, Federico Chiariotti, Deniz G\"{u}nd\"{u}z, and Petar Popovski\thanks{Shaan ul Haque (shaanhaque2016@gmail.com) and Siddharth Chandak (chandaks@stanford.edu) are with the Department of Electrical Engineering, Indian Institute of Technology Bombay, India. Siddharth Chandak is also with the Department of Electrical Engineering, Stanford University, USA. Federico Chiariotti (fchi@es.aau.dk) and Petar Popovski (petarp@es.aau.dk) are with the Department of Electronic Systems, Aalborg University, Denmark. Deniz G\"{u}nd\"{u}z (d.gunduz@imperial.ac.uk) is with the Department of Electrical and Electronic Engineering, Imperial College London, United Kingdom. This work was partly supported by the Villum Investigator Grant ``WATER'' from the Velux Foundation, Denmark. Deniz G\"{u}nd\"{u}z acknowledges support from the European Research Council (ERC) through Starting Grant BEACON (no. 677854), and the UK EPSRC (project no. EP/T023600/1).}\vspace{-1cm}
}

\maketitle

\begin{abstract}
The rapid development of \gls{iot} technologies has not only enabled new applications, but also presented new challenges for reliable communication with limited resources. In this work, we define a novel problem that can arise in these scenarios, in which a set of sensors need to communicate a joint observation. This observation is shared by a random subset of the nodes, which need to propagate it to the rest of the network, but coordination is complex: as signaling constraints require the use of random access schemes over shared channels, sensors need to implicitly coordinate, so that at least one transmission gets through without collisions. Unlike the majority of existing medium access schemes, the goal is to make sure that the shared message gets through, regardless of the sender. We analyze this coordination problem theoretically and provide low-complexity solutions. While a clustering-based approach is near-optimal if the sensors have prior knowledge, we provide a distributed \gls{mab} solution for the more general case and validate it by simulation.
\end{abstract}

\glsresetall

\begin{IEEEkeywords}
Distributed coordination, multi-armed bandit, Thompson sampling, random access
\end{IEEEkeywords}

\section{Introduction}
\label{sec:intro}
\glsresetall

Over the past few years, the rise of the \gls{iot}~\cite{cheng2018industrial} has opened new possibilities in the manufacturing, energy, and health sectors. The promise of 5G and beyond networks is to support massive numbers of sensors and machine-type devices, along with sporadic low-latency communications, without affecting human communication traffic~\cite{gangakhedkar2018use}. However, there are still many open problems in coordinating medium access for sporadically active sensors~\cite{vilgelm2021random}, particularly in remote deployments with very limited resources. As random access schemes like slotted ALOHA are entirely distributed, and only require limited overhead for slot synchronization, they often represent the best choice for these scenarios~\cite{jian2016random}, but the risk of collisions is a significant drawback, particularly for scenarios with a large number of nodes~\cite{saadawi1981analysis}.

An interesting scenario in this context is the transmission of an observation or a decision that is shared by the devices in a given \emph{active set}. For example, this could be the position of a target object, or an abnormal value of a parameter, which is shared by a subset of the sensors in the network and needs to be communicated to the rest of the sensor network or to an external controller. This scenario, which we call \emph{medium access with a shared message}, is relevant in several applications pertaining to networked control and coordination, in which the whole network needs to perform an action synergistically. In our case, each of the active sensor nodes has the same piece of information, namely the \emph{shared message}, which they want to deliver to the other nodes in a single shot, i.e., without the possibility of retransmissions or coordination~\cite{liu2019wireless}. This is relevant both in \gls{urllc}, in which the strictest constraint is the latency requirement, and in wide-area scenarios, in which energy consumption is the foremost concern.
In this context, traditional throughput maximization-based schemes are extremely inefficient, as they assume each packet carries independent information. We further assume that the active set evolves in a random fashion from one time slot to the next, and each sensor knows only its own membership of the current active set. The challenge here is that a node from the active set cannot coordinate with the other active nodes to send the shared message. The problem looks deceptively simple, but coordinating with limited signaling or shared prior knowledge is extremely difficult. Exploiting correlations in the activity of the sensors to maximize the throughput has been considered in ~\cite{kalor2018random, ali2020sleeping}. A related, but different problem has been treated in~\cite{CommonAlarm19}, where the objective is to reliably transmit a shared alarm message by superposing individual signals; this is different from the collision model adopted here, and does not require coordination between the sensors.

We first prove the existence of an optimal deterministic solution to this coordination problem, which is then shown to be NP-hard by modeling the correlations in sensor activity as edge weights in a graph. The coordination problem with deterministic strategies is an instance of a slightly modified \gls{wgcp}~\cite{chang2010weighted}, which is a well-known NP-hard problem. Distributed solutions to the \gls{wgcp} exist~\cite{barenboim2013distributed}, and have been used in communications scenarios~\cite{hernandez2014frogsim}, but they either require signaling between the nodes or more extensive shared feedback. Our scenario allows for extremely limited signaling, as the sensors only receive an acknowledgment (ACK) in case of correct packet reception. We provide a clustering-based solution, which is near-optimal if only two sensors are active at a given time. However, this does not generalize to larger active sets. Instead, we introduce
a heuristic to construct the clusters in the general case. We also propose an efficient distributed learning solution using \glspl{mab}, which can learn correlation patterns and adapt to them without any signaling except for the ACK. 


The rest of the letter is organized as follows: Sec.~\ref{sec:system} presents the system model and theoretical analysis. Sec.~\ref{sec:solution} presents our two heuristic solutions, which are evaluated through numerical simulations in Sec.~\ref{sec:results}. Finally, we conclude the paper and present ideas for future work in Sec.~\ref{sec:concl}.

\section{System model}
\label{sec:system}

We consider a set $\mathcal{N}$ of $N$ wireless sensors, which share $M$ orthogonal transmission opportunities in time or frequency. In each time slot $t$, a random subset $\mathcal{A}(t)\subseteq\mathcal{N}$ of sensors, with cardinality $A(t) = |\mathcal{A}(t)|$, become active. The active set is drawn independently at each slot according to \gls{pmf} $p_A(\mathcal{A})$. A shared message, e.g., an alarm signal, is to be transmitted by the active sensors to the inactive nodes (which cannot sense the state of the system) or to an external controller. The objective of the active sensors is to deliver the message over the $M$ opportunities, regardless of which sensor it comes from. The challenge lies in the fact that the active sensors have no knowledge of the other sensors in $\mathcal{A}$, and there is no way to explicitly coordinate. The sensors can only agree on a MAC protocol \textit{a priori}. 

It is clear that the inactive sensors at each time slot must remain silent. Each active sensor $a\in\mathcal{A}(t)$ must decide on a transmission pattern, called a \emph{move}, expressed as vector $\mathbf{x}_a\in\{0,1\}^M$, where $x_{a,m}=1$ means that sensor $a$ transmits at the $m$-th opportunity. We can represent the moves of all the sensors as an $A(t)\times M$ matrix $\mathbf{X}(t)$, with vector $\mathbf{x}_a$ as its $a$-th row.  In the following, we omit the time index $t$ for readability. We consider a simple collision channel, where the condition for transmission success $\xi(\mathbf{X},\mathcal{A})$ is:
\begin{equation}
  \xi(\mathbf{X},\mathcal{A})=I\left(\exists m\in\{1,\ldots,M\}:\sum\limits_{a\in\mathcal{A}} x_{a,m}=1\right),\label{eq:move_value}
\end{equation}
where $I(\cdot)$ is the indicator function, equal to 1 if the condition holds, and 0 otherwise.
Note that there is a total of $2^M$ possible moves for each sensor at each time slot. The constrained problem in which each sensor can transmit only once is a special case of this problem, and in general leads to suboptimal solutions, including in some of the cases that we analyze below. The basic rationale for considering multiple transmissions from the same sensor is the same as in \gls{irsa} schemes~\cite{munari2021modern}, which exploit the repetitions to improve throughput and reliability. We can define the \emph{strategy} of node $a$ as the \gls{pmf} $\phi_a(\mathbf{x})$ over the set of possible moves, and represent the strategies of all the sensors in matrix $\Phi\in [0,1]^{N\times 2^M}$, where element $\phi_{n\ell}$ corresponds to the probability of sensor $n$ choosing move $\ell$ when it is active. We have $\sum_{x=1}^{2^M} \phi_{n,x} = 1$, $\forall n \in\mathcal{N}$. By applying the law of total probability, we get:
\begin{equation}
  \mathbb{E}\left[\xi|\Phi\right]=\sum_{\mathcal{A}\in\mathcal{P}(\mathcal{N})}p_A(\mathcal{A})\sum_{\mathclap{\mathbf{X}\in\{0,1\}^{N\times M}}}\xi(\mathbf{X},\mathcal{A})\prod_{a \in \mathcal{A}} \phi_a(\mathbf{x}_a), \label{eq:strategy_value}
\end{equation}
where $\mathcal{P}(\cdot)$ denotes the power set, and $\phi_a(\mathbf{x}_a)$ denotes the probability of sensor $a$ choosing move $\mathbf{x}_a$. We then define our optimization problem, whose solution gives $\Phi^*$, one of the strategies that maximize the expected delivery probability:
\begin{equation}
  \Phi^*=\argmax_{\Phi\in[0,1]^{N\times 2^M}} \mathbb{E}\left[\xi|\Phi\right].
\end{equation}

\begin{theorem}\label{th:det_solution}
At least one of the optimal solutions to the optimization problem is a deterministic strategy; that is,
  \begin{equation}
  \exists \ \Phi\in\{0,1\}^{N\times 2^M}:\mathbb{E}\left[\xi|\Phi\right]=\mathbb{E}\left[\xi|\Phi^*\right].\label{eq:opt_cont}
\end{equation}
\end{theorem}
\begin{proof}
  As the value of $\xi$ is between $0$ and $1$, its expected value is bounded in the compact interval $[0,1]$, and the $[0,1]^{N\times 2^M}$ region specified by the constraints on $\Phi$ is also compact. Accordingly, there exists at least one global maximum. 
  
 Assume that in the optimal solution there is at least one sensor with a non-deterministic policy, which we denote as 1. We can then look at all possible moves $\mathbf{x}_1$ and compute the values of the deterministic strategies for sensor 1:
  \begin{equation}
  \begin{aligned}
    \mathbb{E}\left[\xi|(\delta(\mathbf{x}_1);\Phi^*_{-1})\right]=&\sum_{\mathclap{\mathcal{A}\in\mathcal{P}(\mathcal{N}):1\notin\mathcal{A}}}p_A(\mathcal{A})\mathbb{E}\left[\xi|\Phi,\mathcal{A}\right]+\sum_{\mathclap{\mathcal{A}\in\mathcal{P}(\mathcal{N}):1\in\mathcal{A}}}p_A(\mathcal{A})\\
    &\times\sum_{\mathclap{\mathbf{X_{-1}}\in\{0,1\}^{(N-1)\times M }}}\xi((\mathbf{x}_1;\mathbf{X}_{-1}),\mathcal{A})\prod_{a=2}^A \phi_a(\mathbf{x}_a),
    \end{aligned}
  \end{equation}
  where $\mathbf{X}_{-1}$ is matrix $\mathbf{X}$ without its first row. We can then substitute this into~\eqref{eq:strategy_value}:
  \begin{equation}
     \mathbb{E}\left[\xi|\Phi\right]=\sum_{\mathclap{\mathbf{x}_1\in\{0,1\}^{M}}}\phi_1(\mathbf{x}_1)\mathbb{E}\left[\xi|(\delta(\mathbf{x}_1);\Phi^*_{-1})\right].
  \end{equation}
  Then, the strategy $\Phi'$ that maximizes the expected value:
  \begin{equation}
    \Phi'=\left(\argmax_{\mathbf{x}_1\in\{0,1\}^{M}} \mathbb{E}\left[\xi|(\delta(\mathbf{x}_1;\Phi^*_{-1})\right];\Phi_{-1}\right),
  \end{equation}
  will be a deterministic one due to the linearity of the objective with respect to $\phi_1(\mathbf{x}_1)$ and the compactness of the simplex. By repeating this operation for all the sensors with non-deterministic strategies, we can find a deterministic solution $\Phi^{(d)}$ that satisfies the condition in~\eqref{eq:opt_cont}.
\end{proof}

Hence, we can focus on deterministic strategies over the discrete set of moves $\{0,1\}^{N\times M}$ instead of the continuous probability space $[0,1]^{N\times 2^M}$ without loss of optimality:
\begin{equation}
  \mathbf{X}^*=\argmax_{\mathbf{X}\in\{0,1\}^{N\times M}} \mathbb{E}\left[\xi|\mathbf{X}\right].\label{eq:opt}
\end{equation}
The problem is trivial if only one user is active at a time, i.e., if $A(t)=1, \forall t$: in that case, there is no interference and the trivial solution $\mathbf{x}_{a,m}=1, \forall a\in\mathcal{A}(t)$ is always successful. The same happens if $M\geq N$, in which case the access to the medium can be entirely orthogonal. However, the problem is extremely complex in the general case.

\begin{theorem}
  The problem defined in~\eqref{eq:opt} is NP-hard if $A\geq 2,\ \forall\ \mathcal{A}: p_A(\mathcal{A})>0$.
\end{theorem}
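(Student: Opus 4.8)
The plan is to prove hardness by reduction, exhibiting a family of instances that lie inside the stated regime ($A\ge 2$ always) yet already encode an NP-hard combinatorial problem. The key observation I would establish first is that the collision model in~\eqref{eq:move_value} collapses to a strikingly simple condition on pairs: if the active set is $\mathcal{A}=\{i,j\}$, then on every channel the sum $x_{i,m}+x_{j,m}$ equals $0$, $1$, or $2$, and it equals $1$ for some $m$ precisely when the two move vectors disagree in at least one coordinate. Hence $\xi(\mathbf{X},\{i,j\})=1$ if and only if $\mathbf{x}_i\neq\mathbf{x}_j$, with the degenerate all-silent moves handled automatically by the same statement. I would therefore restrict attention to distributions $p_A$ supported only on two-element sets; every such instance satisfies $A=2\ge 2$, so it is admissible, and for it the objective in~\eqref{eq:opt} becomes
\begin{equation}
\mathbb{E}[\xi\mid\mathbf{X}]=\sum_{\{i,j\}:\,p_A(\{i,j\})>0} p_A(\{i,j\})\,\mathbf{1}\{\mathbf{x}_i\neq\mathbf{x}_j\}.
\end{equation}

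Second, I would recognize this restricted objective as a weighted coloring/cut problem: interpret each sensor as a vertex, each of the $2^M$ admissible move vectors as one of $2^M$ available colors, and each probability $p_A(\{i,j\})$ as the weight of edge $\{i,j\}$. Maximizing $\mathbb{E}[\xi\mid\mathbf{X}]$ then maximizes the total weight of bichromatically colored edges, equivalently minimizes the weight of monochromatic edges --- exactly a (modified) \gls{wgcp} with $k=2^M$ colors. This is the structural correspondence that motivates the rest of the paper.

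Third, I would instantiate a concrete reduction from a canonical NP-hard problem. The cleanest choice is $M=1$, where there are only two colors (transmit / stay silent) and the restricted objective is \emph{exactly} weighted Max-Cut. Given a graph $G=(V,E)$, I would build the instance with $N=|V|$ sensors, $M=1$ channel, and $p_A(\{i,j\})=1/|E|$ for each edge $\{i,j\}\in E$ (and $0$ otherwise). A deterministic strategy $\mathbf{X}\in\{0,1\}^{N\times 1}$ is then a $2$-coloring of $V$, and $\mathbb{E}[\xi\mid\mathbf{X}]=\mathrm{cut}(\mathbf{X})/|E|$, so an optimal strategy yields a maximum cut. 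Since Max-Cut is NP-hard, a polynomial-time solver for~\eqref{eq:opt} would solve Max-Cut, establishing the claim. For $M\ge 2$ there is an equally clean decision-version argument: with uniform weights on the edges of $G$, the perfect-coordination event $\mathbb{E}[\xi\mid\mathbf{X}]=1$ is achievable if and only if $G$ is properly colorable with $2^M$ colors, and deciding $(2^M)$-colorability is NP-complete whenever $2^M\ge 3$.

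The main obstacle is not the reduction itself --- which is short once the structure is seen --- but justifying it honestly: I must verify the pairwise equivalence $\xi=1\Leftrightarrow\mathbf{x}_i\neq\mathbf{x}_j$ in full (including the all-zero moves), and I must confine the construction to two-element active sets so that the admissibility condition $A\ge 2$ is respected without introducing three-way interactions. The latter matters because for active sets of size $\ge 3$ the success condition no longer reduces to pairwise disagreement but to a hypergraph constraint (``not all equal,'' and more), which would turn the analysis into a harder weighted hypergraph coloring problem; the point of the proof is that we never need those larger sets, as pairwise instances already suffice to inherit NP-hardness.
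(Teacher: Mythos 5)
Your proof is correct, and it rests on the same structural insight as the paper's: for an active pair $\{i,j\}$, success occurs iff $\mathbf{x}_i\neq\mathbf{x}_j$, so with $p_A$ supported on two-element sets the problem in~\eqref{eq:opt} is exactly weighted graph coloring with $2^M$ colors (the moves) and edge weights $p_A(\{i,j\})$. Where you differ is in how hardness is concluded. The paper stops at the correspondence: it rewrites the $A=2$ problem as an instance of the \gls{wgcp} and cites that problem's NP-hardness. Strictly speaking, exhibiting your problem as an instance of an NP-hard problem only bounds its difficulty from above; what is needed is the converse embedding, that hard instances can be encoded as instances of~\eqref{eq:opt}. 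That embedding is true (any weighted graph with normalized weights is realizable as a $p_A$) but the paper leaves it implicit. You complete the argument in the correct direction with explicit, self-contained reductions: from Max-Cut when $M=1$, via $\mathbb{E}[\xi\mid\mathbf{X}]=\mathrm{cut}(\mathbf{X})/|E|$, and from graph $2^M$-colorability when $M\geq 2$, where the value $1$ is attainable iff the graph is properly $2^M$-colorable and the all-silent move acts as an ordinary color because the pairwise success test only asks whether the two vectors differ. This buys two things: an unambiguous reduction direction, and hardness for every fixed $M$ rather than only when $M$ is part of the input. As a side benefit, your monochromatic/bichromatic formulation is stated consistently, whereas the paper's Eq.~\eqref{eq:wgcp} as written minimizes $\sum p_A\xi$ (the success weight), an apparent typo for the failure weight $\sum p_A(1-\xi)$. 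Your closing caveat about confining the construction to pairwise active sets matches the paper's own remark that $A>2$ leads to a hypergraph-coloring variant.
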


\begin{proof}
  We will prove the NP-hardness of the problem if all the active sets have $A(t)=2,\ \forall t$, i.e., if exactly two nodes are active at a given time, by showing its equivalence to an instance of the \gls{wgcp}~\cite{hell1990complexity}. \gls{wgcp} determines the $k$-coloring of a weighted undirected graph $\mathcal{G}=(\mathcal{V},\mathcal{E},w)$ with minimum weight~\cite{chang2010weighted}: it assigns an integer number $c_v\in\{1,\ldots,k\}$ to each vertex $v\in \mathcal{V}$. The optimal $k$-coloring is then the solution to the following weight minimization:
  \begin{equation}
    \mathbf{c}^*=\argmin_{\mathbf{c}\in\{1,\ldots,k\}^{|\mathcal{V}|}}\sum_{(u,v)\in \mathcal{E}:c_u=c_v}w_{u,v}.
  \end{equation}
In our case, we can consider a fully connected graph with $\mathcal{V}=\mathcal{N}$. We assign weights equivalent to the probability of two sensors being active at the same time, i.e., $w_{u,v}=p_A(\{u,v\})$. We can then define the weight minimization as:
\begin{equation}
      \mathbf{X}^*=\argmin_{\mathbf{X}\in\{0,1\}^{N\times M}}\sum_{u,v \in \mathcal{N}, u \neq v} p_A(\{u,v\})\xi(\mathbf{X},\{u,v\}).\label{eq:wgcp}
\end{equation}
As solving the communication problem is equivalent to solving the \gls{wgcp}, it is NP-hard for $A=2$. If there are sets with non-zero probability and size $A>2$, the problem is equivalent to a \gls{whcp}, which models active sensor sets as weighted edges between two or more nodes and is also NP-hard~\cite{dinur2005hardness}. Although the problem is equivalent to a \gls{whcp} in terms of complexity, the definition is slightly different, as some combination of strategies might result in a successful transmission even if multiple sensors choose the same move, e.g., if two nodes choose to be silent, while the third transmits. The condition is then on $\xi$, as in~\eqref{eq:wgcp}, and not on the colors (i.e., the strategies) themselves.
\end{proof}

\section{Solving the coordination problem}\label{sec:solution}

In the following, we present two solutions to this complex problem. The first solution is based on clustering and requires full knowledge of $p_A$. The second is based on \gls{mab} learning, and requires a training period in which the nodes attempt to communicate and learn how to coordinate. Unlike the clustering-based solution, \gls{mab} learning can be performed online, as it does not require an oracle view of the network: while knowing exactly which sensors are active at any given time is necessary to estimate $p_A$, the sensors can independently implement \glspl{mab} and try all strategies, using acknowledgments for correctly transmitted packets as their only feedback. The clustering-based solution is limited to the case with $A=2$, and its generalization to larger active sets is non-trivial. Yet, its main advantage is its immediate applicability when the activation distribution $p_A$ is known, with no training period required, as it is derived analytically.

Although finding the optimal strategy is NP-hard, we can always find an optimal solution by brute force, enumerating all $N^{2^M}$ possible strategies. Naturally, the complexity of this solution makes it impractical in most cases, but we can still perform the computation for the small networks that we consider, in order to show the optimality gap.

\subsection{Clustering-based solution}
When $A=2$, we can use the graph representation that we exploited to prove the NP-hardness of the problem to design a clustering-based solution. In this solution, we will group the sensors into $2^M$ clusters, and assign the same move to all the sensors in the same cluster. We will have a collision if and only if two sensors from the same cluster are active simultaneously. This approach is suboptimal, but the optimality gap is small in the scenarios we considered. 

To minimize the collision probability, we employ the probability of two sensors being together in the active set as a cost $d_{u,v}=p_A(\{u,v\})$, where the total cost of a cluster $\mathcal{C}$ is given by the sum of the pairwise costs in the cluster:
\begin{equation}
  d(\mathcal{C})=\sum_{u, v\in\mathcal{C}, u < v} p_A(\{u,v\}).
\end{equation}
Note that the total cost summed over all clusters is equivalent to the failure probability of the scheme if $|\mathcal{A}|=2$, i.e., only one pair of sensors can be active at any time. 

We employ a divisive clustering approach~\cite{roux2018comparative}, which tries to minimize this total cost at each step in a greedy fashion. This approach significantly outperformed agglomerative and K-means clustering in our experiments. We use a modified version of the DIANA clustering algorithm~\cite{kaufman2009finding}, which starts from a single cluster, then iteratively splits the cluster by dividing the highest-cost cluster, starting from the highest-cost node in it. If $A>2$, we need to consider nodes one at a time, starting from the highest-probability one and assigning each node to a strategy iteratively. This heuristic is relatively simple and usually has a small optimality gap, but also requires full knowledge of the activation probability matrix.

\begin{figure*}[t]
    \begin{subfigure}[b]{.21\linewidth}
	    \centering
	    \resizebox{.7\columnwidth}{!}{
        \begin{tikzpicture}
  
	\draw[line width=0.1cm] (0,0) circle (1.35cm);
	
    \draw[fill=blue] (0,0) -- +(72:1.3) arc (72:108:1.3) -- (0,0);
    \draw[fill=blue!55] (0,0) -- +(36:1.3) arc (36:72:1.3) -- (0,0);
    \draw[fill=blue!55] (0,0) -- +(108:1.3) arc (108:144:1.3) -- (0,0);
    \draw[fill=blue!25] (0,0) -- +(0:1.3) arc (0:36:1.3) -- (0,0);
    \draw[fill=blue!25] (0,0) -- +(144:1.3) arc (144:180:1.3) -- (0,0);
    \draw[fill=blue!15] (0,0) -- +(-36:1.3) arc (-36:0:1.3) -- (0,0);
    \draw[fill=blue!15] (0,0) -- +(180:1.3) arc (180:216:1.3) -- (0,0);
    \draw[fill=blue!5] (0,0) -- +(-72:1.3) arc (-72:-36:1.3) -- (0,0);
    \draw[fill=blue!5] (0,0) -- +(216:1.3) arc (216:252:1.3) -- (0,0);
    
%

  	\foreach \i in {1,2,...,10}{
		\def\angle{-\i*36+126}
		\draw[thin] (\angle:1.2cm) -> (\angle:1.3cm);
		\node at (\angle:1cm) {\large\i};
	};

\end{tikzpicture}

\vspace{0.4cm}}
        \caption{Graphical representation.}
        \label{fig:corr_sc}
    \end{subfigure}	
    \begin{subfigure}[b]{.37\linewidth}
	    \centering
        \input{./plots/thompson/corr2_10.tex}
        \caption{Performance with $A(t)=2, \forall t$.}
        \label{fig:corr_2}
    \end{subfigure}	
	\begin{subfigure}[b]{.37\linewidth}
	    \centering
        \input{./plots/thompson/corr3_10.tex}
        \caption{Performance with $A(t)=3, \forall t$.}
        \label{fig:corr_3}
    \end{subfigure}	
     \caption{Results in the regular scenario with $N=10$, $M=2$.\vspace{-0.3cm}}
 \label{fig:corr}
\end{figure*}

\subsection{Distributed learning approach}
It is also possible to learn the optimal policy in a distributed fashion, implementing a \gls{mab} for each sensor. \glspl{mab} are learning agents that have a number of \emph{arms}, which correspond to possible actions, and learn by trying each arm and estimating its expected value over time. There are several sampling strategies to choose which action to use, balancing between exploration (i.e., choosing the action to gain new information) and exploitation (i.e., choosing the action with the highest potential value based on past experience). In our problem, each sensor chooses from $2^M$ arms, each corresponding to a different move. Each sensor will try different transmission patterns whenever it is active, and get a \emph{reward} $\xi(t)=1$ if the communication is successful and 0 otherwise, i.e., the shared reward. Since all active sensors are trying to communicate the same piece of information, the communication is considered as successful even if the packet is transmitted by another sensor; that is, the reward depends on the network's success as a whole, and not any individual sensor's action.

Our solution is a distributed version of Thompson sampling for Bernoulli \glspl{mab}~\cite{ghalme2017thompson}: every time a sensor is active, it chooses a move based on the Thompson sampling algorithm, which is described and explained in~\cite{russo2018tutorial}, and observes the reward. The sampling strategy needs to be entirely distributed, as the only coordination signal available to the agents is the shared binary feedback. The system can also be modeled as a repeated $N$-player cooperative incomplete information game, in which the unknown information is the membership of the active set. Theorem~\ref{th:det_solution} proves that a pure Nash equilibrium exists, and that it represents the optimal strategy, but there might be other suboptimal equilibria. The Thompson sampling solution converges to an equilibrium with bounded regret~\cite{asghari2020regret}, in which no agent tries to unilaterally deviate from the joint policy, but the distributed solution might converge to a suboptimal equilibrium, i.e., a local maximum.

The distributed Thompson sampling algorithm is given in Algorithm~\ref{alg:thompson}, which is run independently at each sensor: every time a node is active, it uses a Beta distribution to assign probabilities to each action based on the expected reward of the action, then observes the result of the move and updates its values. This requires no coordination, as each sensor acts independently, and leads to quick convergence.

\begin{algorithm}[t]
\caption{Distributed Thompson sampling}\label{alg:thompson}
\scriptsize
\begin{algorithmic}[1]
\State$\alpha\gets$ ones($2^M$)
\State$k\gets$ones($2^M$)
\While{sampling} \Comment{Loop over communication rounds}
\If{active} \Comment{Only transmit if active}
\State $X\gets$\Call{chooseAction}{$M,\alpha,k$}
\State $\xi\gets$\Call{move}{X}\Comment{Observe result of the action}
\State $\alpha,k\gets$\Call{ThompsonUpdate}{$X,\xi,\alpha,k$}\Comment{Update action value}
\EndIf
\EndWhile
\Function{ThompsonUpdate}{$X,\xi,\alpha,k$}
\State $k[X]\gets k[X]+1$ \Comment{Increase sample size for $a$}
\State $\alpha[X]\gets\alpha[X]+\xi$ \Comment{Update total reward}
\State \textbf{return} $\alpha,k$
\EndFunction
\Function{chooseAction}{$M,\alpha,k$}
\State $\theta\gets$zeros($2^M$)
\For{$X\in\{0,\ldots,2^M-1\}$}
\State $\theta[X]\gets$\Call{Beta}{$\alpha[X],k[X]+1-\alpha[X]$} \Comment{Sample from Beta distribution}
\EndFor
\State \textbf{return} $\argmax(\theta)$
\EndFunction
\end{algorithmic}
\end{algorithm}

\section{Numerical results}
\label{sec:results}

In the following, we show the simulation results for the clustering and learning solutions. We consider a system with $N$ sensor nodes, transmitting over $M=2$ opportunities. We limit the number of simultaneously active agents to the cases with $A=2$ and $A=3$. Although the clustering solution only works in the former, the \gls{mab} solution works equally well in both of these cases. Furthermore, we consider two different types of node activation distributions:
\begin{enumerate}
 \item \emph{Regular activation}: we consider a correlated activation pattern with significant regularity, which might be, e.g., due to the physical location of the sensors. Sensors that are closer together are often simultaneously active, while sensors that are farther apart have a smaller joint activation probability;
 \item \emph{General case}: $p_A$ is a general \gls{pmf} with no apparent regularity. Hand-designing a solution for this case is extremely difficult, as it requires to solve the overall problem and there are no regular features to exploit.
\end{enumerate}

We also tested the algorithm in a case with deterministic activation, i.e., in which each sensor is only in one possible active set. As expected, this case leads to the election of a \emph{leader} in each group, who always transmits, while the other nodes remain silent. The sensors as a whole can transmit 100\% of the alarms in this case, and convergence of the distributed \gls{mab} to the optimal solution is extremely fast. 

\begin{figure*}
    \begin{subfigure}[b]{.32\linewidth}
	    \centering
        \input{./plots/thompson/gen2_20.tex}
        \caption{$M=2,\ A(t)=2, \forall t$.}
        \label{fig:gen_2}
    \end{subfigure}	
	\begin{subfigure}[b]{.32\linewidth}
	    \centering
        \input{./plots/thompson/gen3_20.tex}
        \caption{$M=2,\ A(t)=3, \forall t$.}
        \label{fig:gen_3}
    \end{subfigure}	
    	\begin{subfigure}[b]{.32\linewidth}
	    \centering
        \input{./plots/thompson/test2.tex}
        \caption{$M=4,\ A(t)=4, \forall t$.}
        \label{fig:m=4}
    \end{subfigure}	
     \caption{Performance in the general scenario with $N=20$.\vspace{-0.3cm}}
 \label{fig:gen}
\end{figure*}

\subsection{Regular activation}

In this model, we assume that $N=10$ sensors are located around a circle (see Fig. \ref{fig:corr_sc}), and that sensors closer to each other are more likely to be active at the same time. At each time $t$, first sensor in $\mathcal{A}(t)$ is picked at random with probability $\frac{1}{N}$, while the second sensor is picked with probability $p(d)$, which depends on the distance from the first sensor along the circle: if $d\!=\!1$, the probability is 0.275, if $d\!=\!2$, it is 0.125, if $d\!=\!3$, it is 0.075, and if $d\!=\!4$, it is 0.025. This is shown graphically in Fig.~\ref{fig:corr_sc}: if sensor 1 is picked, the higher probability of picking closer sensors as the second sensor is depicted as a deeper blue. The sensor diametrically opposite to the first one is never picked. When $A=3$, the third sensor is picked from the same distribution, considering the distance from the second sensor. The strategy is intuitive: nodes that often appear together should have different strategies, so as to avoid collisions and transmit the shared message in at least one of the 2 opportunities. This implies that some sensors are always silent, and others use both opportunities: as this is only a problem in case of two sensors with the same strategy, maximizing the number of different strategies by assigning $(0,0)$ and $(1,1)$ to some nodes is the best choice. If $A=3$, the situation is slightly different, as it is possible to achieve an optimal solution by only using strategies $(1,0)$ and $(0,1)$: the objective of the sensors in this case is to avoid a scenario in which all three collide in the same transmission opportunity, as any other combination leads to a successful transmission. 

The results are shown in Fig.~\ref{fig:corr}, which includes the best and worst results from 5 independent runs of the distributed \gls{mab} approach: in the worst case, this approach might get stuck in a suboptimal equilibrium, corresponding to a local maximum of the reward function. When $A=2$, the \gls{mab} solution always reaches the optimum, while the clustering solution has a small optimality gap. If $A=3$, the situation is reversed, and the \gls{mab} solution converges to a slightly suboptimal solution. In the worst case, the optimality gap of the worst learning curve is close to 0.01. However, the \gls{mab} solution does not require any prior knowledge of the activation probabilities, and can be trained online with no knowledge beyond a shared ACK signal in relatively few iterations. 

\subsection{General scenario}

Finally, we show the results for a general scenario, using a randomly drawn activation probability matrix with $N=20$. In this scenario, $A=3$ case is harder, as Fig.~\ref{fig:gen} shows: the success rate for the best \gls{mab} training is slightly below $0.75$, while still outperforming the heuristic. If $A=2$, the \gls{mab} and clustering solutions have similar performance, close to 0.85. In this case, we do not know the optimal solution as the brute-force search required to find it is too complex due to the large number of sensors. However, training times for the \gls{mab} solution appear to scale, and the final performance is close to the clustering-based heuristics, which starts with the full knowledge of the environment. The \gls{mab} solution with Thompson sampling is general, fast, efficient, and robust, as it can easily deal with errors on the acknowledgments as well as the packets. However, Thompson sampling does not always reach the optimal solution, as it can get stuck in local maxima: using an $\varepsilon$-greedy strategy guarantees better performance after convergence, but requires approximately 1000 times more samples to reach convergence, and as such becomes impracticable in realistic scenarios.

The same happens in a more complex scenario, shown in Fig.~\ref{fig:m=4}, with $M=4$ and $A=4$: in this case, the \gls{mab} solution performs slightly worse than the heuristic, but it can still reach convergence in relatively few samples, and the optimality gap of the worst learning curve is just 0.03. 
It is important to note that the constructive heuristic requires full knowledge of the activation probabilities, which are hard to estimate accurately in practice. On the other hand, the \gls{mab} solution only requires the ACK signal, and has relatively fast convergence, making it an attractive alternative.

\section{Conclusions and future work}
\label{sec:concl}

We have introduced and examined a novel distributed coordination and communication problem, in which multiple wireless sensors need to coordinate and find a common strategy to transmit a shared message. The problem is fundamentally different from general error rate minimization, in which every packet contains different information, and is NP-hard. We proved that a deterministic optimal solution exists, and proposed two heuristics with a small optimality gap in practical conditions. The clustering-based heuristic is close to the optimum, but requires \emph{a priori} knowledge of the system, while the \gls{mab} solution can get stuck in a local maximum, but it can be trained online with no additional signaling.

There are several possible avenues of future work on the topic, including the use of more advanced learning mechanisms, such as neural network-based bandits which can generalize experience and converge with fewer training samples. Another interesting research direction is the application of these principles to swarm control, in which the sensors are exchanging information about a shared environment, which they can directly modify by acting in concert. Finally, the scalability of the solutions to scenarios with massive numbers of devices is a significant challenge for future research.

\bibliographystyle{IEEEtran}
\bibliography{references}
\end{document}